\long\def\comment#1{}
\newfont{\bbb}{msbm10 scaled 700}
\newfont{\bb}{msbm10 scaled 1100}
\newcommand{\PP}{\mathbb{P}}
\newcommand{\RR}{\mathbb{R}}
\newcommand{\av}{{\bf a}}
\newcommand{\mv}{{\bf m}}
\newcommand{\sv}{{\bf s}}
\newcommand{\xv}{{\bf x}}
\newcommand{\yv}{{\bf y}}
\newcommand{\zv}{{\bf z}}
\newcommand{\Am}{{\bf A}}
\newcommand{\rhov}{\hbox{\boldmath$\rho$}}
\newcommand{\SNR}{{\sf SNR}}
\DeclareMathOperator*{\argmax}{arg\,max}
\newtheorem{theorem}{Theorem}
\newtheorem{remark}{Remark}
\begin{document}

\newcommand\figref{Figure~\ref}

\newcommand{\ben}{\begin{enumerate}}
\newcommand{\een}{\end{enumerate}}

\newcommand{\beq}{\begin{equation}}
\newcommand{\eeq}{\end{equation}}

\newcommand{\bi}{\begin{itemize}}
\newcommand{\ei}{\end{itemize}}

\newcommand{\1}{\mathds{1}}
\markright{\today}

    \title{Unsourced Multiuser Sparse Regression Codes achieve the Symmetric MAC Capacity
        }
\author{\IEEEauthorblockN{Alexander Fengler, Peter Jung, and Giuseppe Caire}
\IEEEauthorblockA{Communications and Information Theory Group,
Technische Universit\"{a}t Berlin\\
E-mail: \{fengler, peter.jung, caire\}@tu-berlin.de.}
}
\maketitle

\begin{abstract}
    Unsourced random-access (U-RA) is a type of grant-free random access with a
    virtually unlimited number of users, of which only a certain number $K_a$ are
    active on the same time slot. Users employ exactly the same codebook, and
    the task of the receiver is to decode the list of transmitted messages.
    Recently a concatenated coding
    construction for U-RA on the AWGN channel was presented,
    in which a sparse regression code (SPARC) is used as an
    inner code to create an effective outer OR-channel. Then an outer code is used
    to resolve the multiple-access interference in the OR-MAC. In this work
    we show that this concatenated construction can achieve a vanishing per-user error probability
    in the limit of large blocklength and a large number of active users at sum-rates up to the symmetric
    Shannon capacity, i.e. as long as $K_aR < 0.5\log_2(1+K_a\SNR)$.
    This extends previous point-to-point optimality results about SPARCs to the unsourced
    multiuser scenario.
    Additionally, we calculate the algorithmic threshold, that is a bound on the sum-rate
    up to which the inner decoding can be done reliably with the low-complexity AMP algorithm.
\end{abstract}

\section{Introduction}
One of the key new application scenarios of future wireless networks is
known as the Internet-of-Things (IoT), where it is envisioned that
a very large number of devices (referred to as users) is sending data to a common
access point. 
Typical examples thereof include sensors for monitoring smart infrastructure or biomedical devices. 
This type of communication is characterized by short messages and sporadic activity.
The large number of users and the sporadic nature of the transmission makes
it very wasteful to allocate dedicated transmission resources to all the users.
In contrast to this requirements,
the traditional information theoretic treatment of the multiple-access uplink channel
is focused on few users $K$, large blocklength $n$ and coordinated transmission,
in the sense that each user is given an individual distinct
codebook, and the $K$ users agree on which rate $K$-tuple inside the capacity
region to operate \cite{El1980}.
Mathematically, this is reflected by considering the limit of infinite message- and
blocklength while keeping the rate and the number of users fixed.
Another route, more suited to the IoT requirements,
was taken in recent works like \cite{Che2017,Pol2017}, where the number of users $K$
is taken to infinity along with the blocklength. 
It was shown, that the information theoretic limits
may be drastically different,
when the number of users grows together with the blocklength.\\
A novel random access paradigm, referred to as unsourced random-access (U-RA),
was suggested in \cite{Pol2017}. In U-RA each user employs the same codebook and the task
of the decoder is to recover the list of transmitted messages irrespective of the identity of
the users. The number of \emph{inactive}
users in such a model can be arbitrary large and the performance of the system depends
only on the number of \emph{active} users $K_a$. Furthermore, 
a transmission protocol without the need for
a subscriber identity
is well suited for mass production. These features make U-RA particularly interesting for
the aforementioned IoT applications. \\
In \cite{Pol2017} the U-RA model for the real adder AWGN-MAC was introduced and
a finite-blocklength random coding bound on the achievable energy-per-bit over $N_0$($E_b/N_0$)
was established.
In following works several
practical approaches were suggested which successively reduced the
gap to the random coding achievability bound \cite{Vem2017,Ama2018,Mar2019,Pra2019}.
The model has been extended to fading \cite{Kow2019a} and MIMO channels \cite{Fen2019d}.
A concatenated coding approach for the U-RA problem on the real adder AWGN was proposed
in \cite{Ama2018}. The idea is to split each transmission up
into $L$ subslots. In each subslot the active users send a column from a common
\emph{inner} coding matrix, while the symbols across all subslots are chosen from a
common \emph{outer tree code}. In \cite{Fen2019c} the relation of the inner code to
sparse regression codes (SPARCs) was pointed out.
SPARCs were introduced in \cite{Jos2012} as a class of channel codes for the point-to-point
AWGN channel, which can achieve rates up to Shannon capacity under maximum-likelihood decoding.
Later, it was shown that SPARCs can achieve capacity under
approximate message passing (AMP) decoding with either power
allocation \cite{Rus2017} or spatial coupling \cite{Bar2017a}.
AMP is an iterative low-complexity algorithm for solving random linear
estimation problems or generalized versions thereof \cite{Don2009a,Ran2011,Bay2011}.
A recent survey on SPARCs can be found in \cite{Ven2019a}.\\
Based on the connection of the inner code of \cite{Ama2018} to SPARCs, in \cite{Fen2019c} we
suggested a modified version of AMP as an inner decoder, which improved the performance compared
to the original inner decoder of \cite{Ama2018}.
One of the appealing features of the AMP algorithm is, that it is possible to analyse its
asymptotic performance, averaged over certain random matrix ensembles,
through the so called state evolution (SE) equations \cite{Bay2011,Ber2017a}.
Interestingly the SE equations can also be obtained as the fixed points of the
replica symmetric (RS) potential, an expression that was first calculated through the non-rigorous
replica method \cite{Tan2002,Guo2005c}. It was shown that
in random linear estimation problems the fixpoints of the RS potential also
characterize the symbols-wise posterior distribution of the input elements and therefore
also the error probability of several optimal estimators like the symbol-by-symbol
maximum-a-posteriori (SBS-MAP) estimator \cite{Guo2009,Ran2012a}.
The difference between the AMP and the SBS-MAP estimate is, that
the SBS-MAP estimate always corresponds to the global minimum of the RS-potential, while the
AMP algorithm gets 'stuck' in local minima. The rate below which a local minimum appears
was called the \emph{algorithmic} or \emph{belief-propagation} threshold in \cite{Guo2009, Krz2012a,Bar2017a}.
It was shown in \cite{Krz2012a,Bar2017} that, despite the existence of local minima in the RS-potential,
the AMP algorithm can
still converge to the global minimum when used with spatially coupled matrices.
Although the RS-potential was derived by (and named after) the non-rigorous replica method,
it was recently proven to hold rigorously \cite{Ree2016,Bar2017}. The proof
of \cite{Bar2017} is more general in the sense that it includes the case
where the unknown vector $\sv$
consists of blocks of size $2^J$ and each block is considered to be drawn iid from some distribution
on $\RR^{2^J}$. Initially, the result of \cite{Bar2017} relied on the conjecture that the
SE equations of the AMP algorithm hold for the case of a block iid distribution.
But \cite{Ber2017a} has shown that the SE equations hold under quite weak assumptions
on $\sv$,
which include the block iid case, and therefore has proven the missing conjecture in \cite{Bar2017}.\\
Building on these results, in \cite{Fen2019c}, we calculated the RS-potential
of the inner decoding problem, which allowed us to calculate
the asymptotic error probabilities of the SBS-MAP and the AMP estimate.
The results were semi-analytical, in the sense that the fixpoints of the RS-potential could
only be evaluated numerically.
In this work we show, that in the limit of $K_a,J\to\infty$ with $J = \alpha \log_2 K_a$
for some $\alpha >1$, the RS-potential
converges to a simple form with a sharp threshold on the achievable sum-rates. \\
We have also shown in \cite{Fen2019c} that the inner decoding creates an effective outer
OR-channel \cite{Cha1981,Coh1971}
under a specific input constraint and we gave upper bound on the achievable
rates on that channel. As pointed out in \cite{Fen2019e}, the outer tree code of \cite{Ama2018}
is able to achieve that bound exactly in the limit of infinite subslots $L$ at a decoding
complexity exponential in $L$ or up to a multiplicative constant with a decoding
complexity linear in $L$.\\
Our main contribution in this work is to show that
the concatenated coding scheme of \cite{Ama2018} consisting of multiuser SPARCs combined
with an outer tree code is reliable,
in the sense that it can achieve a vanishing per-user error probability in the
limit of large blocklength and infinitely many users,
at sum-rates up to the symmetric Shannon capacity
$0.5\log_2(1+K_a\SNR)$. This also shows that an unsourced random access scheme can,
in the considered scaling regime $\alpha >1$, achieve the same symmetric
rates as a non-unsourced scheme. \\
The U-RA problem on the real AWGN adder is formally equivalent to the On-Off random access
scheme defined in \cite{Fle2009c}, and there are several other works, which analyse the sparse
recovery problem, assuming an iid prior on the unknown vector, using either the replica method
like \cite{Guo2009,Ree2012a} or more direct compressed sensing based methods like
\cite{Guo2009,Fle2009c,Ree2012a}.
It is not obvious how the asymptotic result of our Theorem \ref{thm:conc} below
can be obtained directly from replica arguments,
since it requires $J$ to scale proportional to the blocklength $n$,
i.e. the undersampling ratio $2^J/n$ to go to infinity.
Such a behavior is not covered by the available
framework.
We can obtain this result by first calculating the RS-potential in
the limit of large $n,L$ with fixed
$J$ and then take the limit $J\to\infty$.
Also compressed sensing based results like \cite{Fle2009c,Ree2012a} are insufficient,
since they contain unspecified
constants, which are necessary to derive an exact capacity.
\section{System model}
\label{sec:system}
Let $K_a$ denote the number of active users,
$n$ the number of available channel uses and
$B = nR$ the size of a message in bits.
The spectral efficiency is given by $\mu = K_aB/n$.
The channel model used is
\beq
\yv = \sum_{i=1}^{K_\text{tot}} q_i \xv_i + \zv, 
\label{eq:basic_channel}
\eeq
where each $\xv_i \in \mathcal{C} \subset \RR^n$
is taken from a common codebook $\mathcal{C}$ and $q_i\in\{0,1\}$ are binary variables
indicating whether a user is active.
The number of active users is denoted as $K_a = \sum_{i=1}^{K_\text{tot}}q_i$.  
The codewords are assumed to be normalized
$\|\xv_i\|_2^2 = nP$ and the noise vector $\zv$ is Gaussian iid
$z_i\sim\mathcal{N}(0,N_0/2)$,
such that $\SNR = 2P/N_0$ denotes the real per-user $\SNR$.
All the active users pick one of the $2^B$ codewords from $\mathcal{C}$,
based on their message $W_k\in[1:2^B]$.
The decoder of the system produces a list $g(\yv)$ of at most $K_a$ messages.
An error is declared if one of the transmitted messages is missing in the output list $g(\yv)$
and we define the per-user probability of error as:
\beq
P_e = \frac{1}{K_a}\sum_{k=1}^{K_a} \PP(W_k \notin g(\yv)).
\eeq
Note that the error
is independent of the user identities in general and especially independent
of the inactive users.
The performance of the system is measured in terms of the standard quantity
$E_b/N_0 := P/(RN_0) $ and the described coding construction is called \emph{reliable}
if $P_e \to 0$ in the considered limit.
\section{Coding Construction}
\label{sec:coding}
In this work we focus on a special type of codebook,
where each transmitted codeword is created in the following way:
First, the $B$-bit message $W_k$ of user $k$ is mapped to an $LJ$-bit codeword from some common \emph{outer}
codebook. Then each of the $J$-bit sub-sequences is mapped to an index $i_k(l) \in [1:2^J]$
for $l=[1:L]$ and $k=[1:K_a]$. The inner codebook is based on a set of $L$
coding matrices $\Am_l\in\RR^{n\times 2^J}$.
Let $\av^{(l)}_i$ with
$i=[1:2^J]$ denote the columns of $\Am_l$. The inner codeword of user $k$
corresponding to the sequence of indices $i_k(1),...,i_k(L)$ is then created as
\beq
\xv_k = \sum_{l=1}^L \av^{(l)}_{i_k(l)}.
\eeq
The $\Am_l$ are assumed to be scaled such that $\|\av^{(l)}_i\|_2^2 = nP/L$.
The above encoding model can be written in matrix form as
\beq
\yv = \sum_{k=1}^{K_a} \Am \mv_k + \zv = \Am \left(\sum_{k=1}^{K_a} \mv_k\right) + \zv.
\label{eq:inner_channel}
\eeq
where $\Am = (\Am_1|...|\Am_L)$ and $\mv_k \in \RR^{L2^J}$ is a binary vector satisfying
$m_{k,(l-1)2^J + i_k(l)} = 1$ and zero otherwise, for all $l=[1:L]$.
Let $\sv = \sum_{k=1}^{K_a} \mv_k$.
\eqref{eq:inner_channel} can be viewed as concatenation of an inner point-to-point channel
$\sv\to \Am\sv + \zv$ and an
outer binary input adder MAC $(\mv_1,...,\mv_{K_a}) \to \sv$.
We will refer to those as the
\emph{inner} and \emph{outer channel}, the corresponding encoder and decoder
will be referred to as \emph{inner} and \emph{outer encoder/decoder} and
the aggregated system of inner and outer encoder/decoder as the \emph{concatenated system}.\\
The per-user inner rate in terms of bits/c.u. is given by
$R_\text{in} := LJ/n$ and the outer rate is given by $R_\text{out} = B/LJ$.

\section{Main Result}
Our main result states that inner and outer codes exist, such that the concatenated coding construction described above
is reliable at sum-rates up to the symmetric Shannon capacity. 
\begin{theorem}
    \label{thm:conc}
    Let $n,L,J,K_a \to \infty$ and $R,\SNR \to 0$ with fixed $E_b/N_0 = \SNR/(2R)$, $S=K_aR$
    and $J = \alpha \log_2 K_a$ for any $\alpha >1$. In this limit there is a concatenated
    code as described above that can be decoded with $P_e \to 0$ if
    \beq
        S < \frac{1}{2}\log_2 (1 + K_a\SNR)
        \label{eq:conc_thm}
    \eeq
    \hfill$\square$
\end{theorem}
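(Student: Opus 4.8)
The plan is to decompose the per-user error probability of the concatenated system into an inner error and an outer error, and to show that each can be driven to zero under the stated rate condition by choosing the auxiliary parameters appropriately. Write $R_{\text{in}} = LJ/n$ and $R_{\text{out}} = B/(LJ)$, so that $R = R_{\text{in}}R_{\text{out}}$ and $S = K_aR = K_aR_{\text{in}}R_{\text{out}}$. The inner channel $\sv \to \Am\sv + \zv$, after SPARC decoding, produces an estimate of the support of $\sv$ section by section; since $\sv = \sum_k \mv_k$ has at most $K_a$ active entries per section of length $2^J$, this is precisely the effective outer OR-MAC studied in \cite{Fen2019c}. First I would invoke the asymptotic analysis of the RS-potential from \cite{Fen2019c}: in the limit $n,L\to\infty$ with $J$ fixed, the inner decoding error is governed by the fixpoints of the RS-potential, and then taking $J = \alpha\log_2 K_a \to \infty$ with $\alpha>1$, this potential converges to a simple form with a sharp threshold. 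The key quantitative claim — which I expect is the technical heart of the paper preceding this theorem — is that the inner decoder reliably recovers the list of active indices per section as long as $R_{\text{in}}$ stays below a capacity-like quantity $C_{\text{in}}(\SNR, K_a)$, and that, crucially, $K_a C_{\text{in}} \to \tfrac12\log_2(1+K_a\SNR)$ in the stated scaling. This is where the condition $\alpha>1$ enters: it guarantees $2^J/K_a \to \infty$, so each section is "sparse enough" relative to its length for the threshold to be the symmetric-capacity one rather than something strictly smaller.

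Next I would handle the outer code. By \cite{Fen2019c,Fen2019e}, the effective outer channel is an OR-MAC under an input constraint (each $\mv_k$ has exactly one nonzero per section), and the tree code of \cite{Ama2018} achieves the OR-MAC bound exactly in the limit $L\to\infty$. So for any $R_{\text{out}}$ below that bound, the outer decoder resolves the multiaccess interference — i.e., disentangles the superimposed lists into the individual codewords $\mv_k$ — with vanishing per-user error. The remaining task is a bookkeeping/optimization argument: choose the split between $R_{\text{in}}$ and $R_{\text{out}}$ (equivalently, choose $L$ and $J$ relative to $n$ and $B$) so that simultaneously $R_{\text{in}} < C_{\text{in}}$ and $R_{\text{out}}$ is below the OR-MAC bound, while the product $K_a R_{\text{in}} R_{\text{out}} = S$ approaches $\tfrac12\log_2(1+K_a\SNR)$. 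I would argue that the OR-MAC bound on $R_{\text{out}}$ tends to $1$ (in appropriate units) as the relevant parameters scale, so essentially no rate is lost in the outer stage, and the bottleneck is entirely the inner threshold; then $S \to K_a C_{\text{in}} \to \tfrac12\log_2(1+K_a\SNR)$, giving \eqref{eq:conc_thm}.

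Finally, a union bound over the two error events: $P_e \le P_e^{\text{inner}} + P_e^{\text{outer}}$, and both terms vanish in the limit, so $P_e\to 0$. The main obstacle, I expect, is making the interchange of limits rigorous and uniform — the RS-potential characterization of the inner error is established for fixed $J$ as $n,L\to\infty$, and one then wants to send $J\to\infty$; one must ensure the error bound obtained at fixed $J$ decays fast enough (uniformly in the relevant regime) that letting $J = \alpha\log_2 K_a$ grow with the other parameters still yields a vanishing bound, and that the convergence of the RS-potential to its limiting threshold form is accompanied by convergence of the associated error probabilities, not merely of the fixpoints. A secondary delicate point is controlling the effect of "false alarms" in the inner decoder (sections where the recovered list contains spurious indices), since these feed corrupted symbols into the outer tree decoder; one needs the inner guarantee to bound the total number of such errors across all $L$ sections, not just the per-section rate, so that the outer code — which can tolerate a limited number of symbol errors — still succeeds.
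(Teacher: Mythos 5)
Your overall architecture matches the paper's: inner SPARC support recovery analyzed via the RS-potential of \cite{Fen2019c} in the double limit ($n,L\to\infty$ at fixed $J$, then $J=\alpha\log_2 K_a\to\infty$), the recovered support forming a noiseless OR-MAC, the tree code of \cite{Ama2018} achieving the OR-MAC bound, and a union bound over inner and outer error events. However, your rate accounting contains a genuine error, and it sits exactly at the crux of the theorem. You claim that the outer OR-MAC bound on $R_\text{out}$ ``tends to $1$'' so that no rate is lost in the outer stage, and correspondingly that the inner threshold is $K_aR_\text{in}<\tfrac12\log_2(1+K_a\SNR)$. Neither is what holds in this scaling. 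Since $\alpha>1$ is a \emph{fixed} constant, the OR-MAC constraint $K_aR_\text{out}J<2^J\mathcal{H}_2\bigl((1-2^{-J})^{K_a}\bigr)$ converges to $R_\text{out}<1-\alpha^{-1}$, which is strictly below $1$ and does not improve as $J\to\infty$; the outer stage genuinely loses a fraction $1/\alpha$ of the rate. What saves the theorem is that the limiting RS-potential gives an inner threshold of the form \eqref{eq:inner_thm_opt}, namely $S_\text{in}\bigl(1-\tfrac1\alpha\bigr)<\tfrac12\log_2(1+2S_\text{in}\mathcal{E}_\text{in})=\tfrac12\log_2(1+K_a\SNR)$: the inner sum-rate $S_\text{in}=K_aR_\text{in}$ is allowed to \emph{exceed} the symmetric capacity by exactly the factor $(1-1/\alpha)^{-1}$, and since $S=S_\text{in}R_\text{out}$ the two $(1-1/\alpha)$ factors cancel, yielding \eqref{eq:conc_thm}. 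With your version of the constituent bounds the argument either rests on the false claim $R_\text{out}\to 1$, or, using the correct outer bound, proves only $S<(1-\tfrac1\alpha)\tfrac12\log_2(1+K_a\SNR)$, which is strictly weaker than the theorem.

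A secondary remark: your worries about uniformity of the limit interchange and about false alarms are reasonable hedges, but the paper's route does not confront them as error-exponent bookkeeping; it computes the pointwise limit of the (rescaled) RS-potential (Theorem \ref{thm:limit}), identifies its global minimizer $\eta^*=1$ under \eqref{eq:inner_thm_opt}, and notes that the effective SNR $\eta^*\hat P\propto J$ of the decoupled scalar channel \eqref{eq:d_model} diverges, so the per-component support error vanishes; the residual section errors are then absorbed by the outer tree code as in \cite{Fen2019e}. The missing idea in your proposal is thus not technical polish but the correct split of the $(1-1/\alpha)$ factor between the inner and outer stages.
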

Note that within our asymptotic regime $K_a\SNR = 2SE_b/N_0$ is a constant.
As mentioned in Section \ref{sec:coding}, the inner decoding is equivalent to a structured
sparse recovery problem of finding $\sv$ from the knowledge of $\yv$ and $\Am$, where
\beq
    \yv = \Am\sv + \zv
    \label{eq:inner_channel2}
\eeq
and $\sv\in\RR^{L2^J}$ is generated according to the model
described in Section \ref{sec:coding}, i.e. $\sv = \sum_{k=1}^{K_a} \mv_k$.
We say that $\sv$ is \emph{generated from evenly distributed messages}, if
the outer encoded sequences  $i_k(1),...,i_k(L)$ are
distributed evenly, i.e. $\PP(i_k(s) = j) = 1/2^J$
for all $j=[1:2^J]$, and so $\PP(m_{k,(l-1)2^J + i_k(l)} = 1) = 1/2^J$ for all $l=[1:L]$.
We will show that it is enough to recover the support of $\sv$.
The asymptotic limitations of the problem of support recovery of structured sparse vectors
in the considered scaling regime
are a novel result on their
own, therefore we analyse two types of support estimators.
Let $\rhov$ be the binary vector indicating the support of $\sv$, i.e.
$\rho_i = 1$ if and only if $s_i \neq 0$.
The SBS-MAP estimator of $\rhov$
\beq
\hat{\rho}_i = \argmax_{\rho \in \{0,1\}} \PP(\rho_i = \rho|\yv,\Am)
\label{eq:sbs-map}
\eeq
minimizes the SBS error probability $\PP(\hat{\rho}_i \neq \rho_i)$ but is typically unfeasible
to compute in practice.
The second estimator is
the low-complexity AMP algorithm, which produces an estimate of $\rhov$
by iterating the following equations
\beq
\begin{split}
    \rhov^{t+1} &= \eta_t(\Am^\top\zv^t + \rhov^t) \\
    \zv^{t+1}   &= \yv - \Am\rhov^{t+1} +
    \frac{2^JL}{n}\zv^{t}\langle\eta^\prime_t(\Am^\top\zv^t+\rhov^t)\rangle
\end{split}
\label{eq:amp}
\eeq
where the functions $\eta_t:\RR^{2^JL}\to\RR^{2^JL}$ are defined componentwise
$\eta_t(\xv) = (\eta_{t,1}(x_1),...,\eta_{t,2^JL}(x_{2^JL}))^\top$ and each component is given by
\beq
\eta_{t,k}(x) = \sqrt{\hat{P}}\left(1 + \frac{p_0}{1-p_0}\exp\left(\frac{\hat{P}-2\sqrt{\hat{P}}x}{2\tau^2_t}\right)\right)^{-1}
\eeq
with $\tau^2_t = \|\zv^t\|_2^2/n$, $\hat{P} = n\SNR/L$ and $p_0 = (1-2^{-J})^{K_a}$.
$\langle\xv\rangle = (\sum_{i=1}^N x_i)/N$ denotes the average of a vector, $\eta^\prime_t$
denotes the componentwise derivative of $\eta_t$ and
we choose $\rhov^0 = \mathbf{0}$ as initial value.
Our result on the inner recovery problem is as follows:
\begin{theorem}
    \label{thm:inner}
    Let $\Am \in \RR^{L2^J}$ be a matrix with Gaussian iid entries
    $A_{ij}\sim\mathcal{N}(0,P/L)$ and let $\yv$ and $\sv$ be jointly distributed
    according to the model \eqref{eq:inner_channel2} with $\sv$ being generated from
    evenly distributed messages.
    Furthermore, let $R_\text{in} = LJ/n$.
    In the limit $L,n,K_a,J\to \infty$ with $J = \alpha\log_2 K_a$ for some $\alpha>1$
    and $\SNR,R_\text{in} \to 0$ with fixed ratio
    $\mathcal{E}_\text{in}=\SNR/(2R_\text{in})$ and fixed inner sum-rate
    $S_\text{in} = K_aR_\text{in}$ the following holds:\\
    The SBS-MAP detector recovers the support of $\sv$ reliably if 
    \beq
    S_\text{in}\left(1-\frac{1}{\alpha}\right) < \frac{1}{2}\log_2 (1 + 2S_\text{in}\mathcal{E}_\text{in})
    \label{eq:inner_thm_opt}
    \eeq
    and the AMP decoder recovers the support of $\sv$ reliably if
    \beq
    S_\text{in}< \log_2 e\left(1 - \frac{1}{\alpha}\right)^{-1} - \frac{1}{\mathcal{E}_\text{in}}
    \label{eq:inner_thm_alg}
    \eeq
    \hfill$\square$
\end{theorem}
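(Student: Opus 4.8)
The plan is to reduce the support–recovery problem to a scalar fixed–point analysis of the replica–symmetric (RS) potential of the inner channel, which for every \emph{fixed} block size $2^J$ was computed in \cite{Fen2019c} and is rigorous by \cite{Ree2016,Bar2017,Ber2017a}, and then to carry out the asymptotics $J\to\infty$ of that potential and of the associated AMP state–evolution (SE) recursion. Concretely, I would first fix $J$ and let $n,L\to\infty$ with $R_\text{in},\SNR\to 0$ at the prescribed ratios; by \cite{Fen2019c} the asymptotic per–section error of both estimators is then controlled by a single scalar parameter (the effective–noise variance $\tau^2$ of the equivalent scalar channel, equivalently a section error $E$). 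The SBS–MAP error is governed by the \emph{global} minimiser of the RS potential $\mathcal{F}_J$, whereas the AMP error is governed by the fixed point of the SE map reached from $\rhov^0=\mathbf{0}$, i.e.\ the largest fixed point; in both cases reliable support recovery amounts to this fixed point being the ``good'' one, on which the section–wise posterior concentrates on the true support.

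Next I would substitute the model parameters — per–column power $\hat P=n\SNR/L$, section length $2^J=K_a^{\alpha}$, per–coordinate occupation probability $1-p_0$ with $p_0=(1-2^{-J})^{K_a}\to e^{-K_a^{1-\alpha}}\to 1$, so that the support has $\Theta(K_a)$ occupied coordinates per section — and pass to the limit $J\to\infty$ (hence $K_a=2^{J/\alpha}\to\infty$ as well). After rescaling the free variable I expect $\mathcal{F}_J$ to converge to a limiting potential $\mathcal{F}_\infty$ built from a channel term of the form $\tfrac12\log_2(1+2S_\text{in}\mathcal{E}_\text{in}\,f(E))$ and a combinatorial/source term that collapses to the \emph{linear} term $S_\text{in}(1-1/\alpha)\,g(E)$; the factor $1-1/\alpha$ appears because $\log_2(2^J/K_a)=J(1-1/\alpha)$ is exactly the per–section support entropy, so the effective ``rate'' of the support–recovery problem is $S_\text{in}(1-1/\alpha)$. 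Comparing the value of $\mathcal{F}_\infty$ at the good ($E\to 0$) and bad ($E=$ prior variance) fixed points then yields the SBS–MAP threshold \eqref{eq:inner_thm_opt}, whose right–hand side is $\tfrac12\log_2(1+K_a\SNR)$ since $2S_\text{in}\mathcal{E}_\text{in}=K_a\SNR$.

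For AMP one does not compare potential values but follows the SE orbit from the uninformative initialisation and asks when it is not trapped by a spurious stable fixed point before reaching the good one; equivalently one checks local stability of the good fixed point together with the absence of an intermediate stable one. Linearising the SE map and taking $J\to\infty$ turns this into a first–order (entropic) contraction condition, which is why natural logarithms surface: it becomes $S_\text{in}<\log_2 e\,(1-1/\alpha)^{-1}-1/\mathcal{E}_\text{in}$, i.e.\ \eqref{eq:inner_thm_alg}. Finally, since the RS–potential statements are proved for each fixed $J$, the genuinely joint limit in the statement is recovered by a diagonal argument: given a target error, choose $J$ large so that the $\mathcal{F}_\infty$–analysis yields residual error below half the target, then $n,L$ large (depending on $J$) so that the fixed–$J$ asymptotics are within the other half.

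The main obstacle is the middle step: controlling the $J\to\infty$ asymptotics of $\mathcal{F}_J$ (and of the SE map) carefully enough to (a) identify $\mathcal{F}_\infty$, its minimiser, and the location of its phase transition, and (b) justify taking the two nested limits in this order to describe the joint limit, given that $J$ is not a free parameter but is tied through $J=\alpha\log_2 K_a$ to the blocklength. In particular one must verify that the vanishing sparsity $1-p_0\sim K_a^{1-\alpha}$ and the vanishing per–column SNR $\hat P$ enter the potential so that the limit is non–degenerate — which is precisely where $\alpha>1$ is used, since for $\alpha\le 1$ the occupation probability no longer tends to $1$ and the combinatorial term fails to linearise.
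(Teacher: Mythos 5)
Your plan follows essentially the same route as the paper: the fixed-$J$ decoupled-channel/RS-potential characterization from \cite{Fen2019c} (rigorous via \cite{Ree2016,Bar2017,Ber2017a}), followed by the $J\to\infty$ limit of that potential, with the SBS-MAP threshold read off by comparing the potential at the good and spurious minimizers and the AMP threshold by requiring that no spurious stable SE fixed point --- equivalently no spurious local minimum of the limiting potential --- exists. The ``main obstacle'' you flag (identifying the limiting potential and handling the nested limits, using $\alpha>1$ so that $K_a/2^J\to 0$) is exactly what the paper's Theorem \ref{thm:limit} carries out via dominated convergence, producing the piecewise potential \eqref{eq:rs_limit} whose minimizers $\eta_0^*=(1+2S_\text{in}\mathcal{E}_\text{in})^{-1}$ and $\eta_1^*=1$ yield precisely the thresholds and the entropy factor $1-1/\alpha$ you anticipate.
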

\begin{remark}
    \normalfont
    In the case $K_a = 1$ no outer code is necessary, so $R_\text{in} = R$ and furthermore
    $S_\text{in} = R$ and $2S_\text{in}\mathcal{E}_\text{in} = \SNR$.
    Hence, if $K_a=1$ is fixed and $J\to\infty$,  which corresponds to
    $\alpha \to \infty$, then \eqref{eq:inner_thm_opt} recovers the statements of
    \cite{Jos2012,Bar2017}, i.e. that SPARCs are reliable at rates up to the Shannon
    capacity $0.5\log_2 (1 + \SNR)$
    under optimal decoding. Also the algorithmic threshold \eqref{eq:inner_thm_alg}
    coincides with the
    result of \cite{Bar2017a}.
    In that sense Theorems \ref{thm:conc} and \ref{thm:inner} are
    an extension of \cite{Bar2017a} and show that SPARCs can achieve the optimal rate 
    limit in the unsourced random access scenario. However, notice that the concept of
    our proof technique is simpler, since we make use of the result in \cite{Fen2019c}, which states
    that not only the sections are described by a decoupled channel model, but in the limit
    $J\to\infty$ also the
    individual components. So all the results of Theorem \ref{thm:inner}
    can be derived from the fixpoints of a simple scalar-to-scalar function.
\end{remark}
\begin{remark}
    \normalfont
    The sparse recovery problem \eqref{eq:inner_channel2}
    is very general and it is possible to
    describe random coding for several different classical multiple-access variants,
    where all the users are assumed to have their own codebook.
    For that, let
    $K_a = 1$ and identify the number of section with the number of users. 
    The matrices $\Am_1,...,\Am_L$ are then the codebooks
    of the individual users:
    \begin{itemize}
        \item Fixed $L$ in the limit $J,n\to\infty$ describes the classical
            AWGN adder MAC from \cite{El1980}, where each user has his own codebook.
        \item $L,J,n\to\infty$, where only a fraction of the sections are non-zero
            describes the many-access channel treated in \cite{Che2017}
        \item $J$ fixed and $L,n\to\infty$ describes specific version of
            the many-access MAC treated in \cite{Zad2019,Pol2017}
    \end{itemize}
    It is interesting, that in the first case Theorem \ref{thm:inner} gives the
    correct result, after letting $\alpha\to\infty$, $K_a = 1$ and $L=K$.
    The case of $J,n\to\infty$ at finite $L$ is not directly covered though by our analysis
    framework. Nonetheless, we believe that an extension of this framework should be able
    to show that
    all of the above cases can be derived from a single scalar RS-potential, but this is left
    for future work.
\end{remark}
\begin{proof}[Proof of Theorem \ref{thm:conc}]
    Theorem \ref{thm:inner} shows that, if condition \eqref{eq:inner_thm_opt} is fulfilled,
    there exists an inner coding matrix $\Am$ such that the power constraint is fullfilled on average 
    and the SBS-MAP estimator \eqref{eq:sbs-map}
    recovers the support $\rhov$ of $\sv$ reliably. Then $\rhov$
    is given as the componentwise OR-combination of the input
    message vectors $\mv_k$:
    \beq
        \rhov = \bigvee_{k=1}^{K_a}\mv_k
    \eeq
    This creates an outer noiseless OR-MAC \cite{Cha1981,Coh1971}.
    Let us assume, that all the message vectors $\mv_i$ are
    independently encoded by the same outer code and that the outer encoded
    symbols are evenly distributed.
    The per-user rate of this outer code
    is limited by
    \beq
    K_aR_\text{out}J < 2^J \mathcal{H}_2((1-2^{-J})^{K_a})
    \label{eq:or_bound1}
    \eeq
    where $\mathcal{H}_2$ denotes the binary entropy function.
    As shown in  \cite{Fen2019e}, in the considered limit $K_a,J\to\infty$,
    inequality \eqref{eq:or_bound1} implies
    \beq
    R_\text{out} < 1 - \alpha^{-1}.
    \label{eq:or_bound2}
    \eeq
    Although \eqref{eq:or_bound2} is formally an upper bound it is shown in \cite{Fen2019e}
    that the bound is tight, since it is achievable by the outer tree code described in
    \cite{Ama2018}. Therefore we can assume that a capacity achieving outer code exists
    if $R_\text{out} < 1 - \alpha^{-1}$. Since the total rate is given by
    $R = R_\text{in}R_\text{out}$, we have that $S = S_\text{in}R_\text{out}$ and
    \eqref{eq:conc_thm}
    follows from Theorem \ref{thm:inner}.
\end{proof}
It remains to proof Theorem \ref{thm:inner}.
For that, we build on our results from \cite{Fen2019c},
which characterize the performance of the SBS-MAP estimator \eqref{eq:sbs-map}
in the limit $L,n \to \infty$ with a fixed ratio $L/n$ and fixed $J$.
Through a series of approximations it is shown in \cite{Fen2019c} that for a 
Gaussian iid $\Am$ the
error statistics of the SBS-MAP estimator \eqref{eq:sbs-map}
converge to the error statistics of an SBS-MAP estimate
in $2^JL$ decoupled real Gaussian channels:
\beq
r_i = (\eta\hat{P})^{\frac{1}{2}} s_i + z_i
\label{eq:d_model}
\eeq
where $\hat{P} = n\SNR/L = J\SNR/ R_\text{in} = 2J\mathcal{E}_\text{in}$
and each component $i=[1:L2^J]$
is considered independently of the others. Furthermore,
$s_i \in \{0,1\}$ with 
\beq
p_0 := \PP(s_i = 0) = (1-2^{-J})^{K_a}
\eeq
$\PP(s_i = 1) = 1-p_0$
and
$z_i\sim\mathcal{N}(0,1)$.
The factor $\eta$ is determined by the minimizer of the function
\beq
i^{RS}(\eta) = 2^JI(\eta\hat{P}) + \frac{2^J}{2\beta}[(\eta - 1)\log_2 e - \log_2 \eta],
\label{eq:rs_potential}
\eeq
where $I(\eta\hat{P})$ is the input-output mutual information of the decoupled model
\eqref{eq:d_model} and $\beta = 2^JR_\text{in}/J$. 
The RS potential \eqref{eq:rs_potential} was introduced in \cite{Fen2019c}
as an approximation of the true RS potential of the recovery problem \eqref{eq:inner_channel2},
but it was
shown that the error terms in this approximation are of order $K_a/2^J$. In the asymptotic regime that we consider
$K,J\to\infty$ with $2^J = K_a^\alpha$ and some $\alpha >1$ we have that $K_a/2^J \to 0$.
Therefore, in this limit, \eqref{eq:rs_potential} indeed
characterizes the performance of the SBS-MAP estimator \eqref{eq:sbs-map} exactly.

The AMP algorithm \eqref{eq:amp} is strongly connected to the RS-potential
\eqref{eq:rs_potential} in that the asymptotic
error distribution of the AMP estimate at convergence is described by the same decoupled
channel model \eqref{eq:d_model}, only that 
the coefficient $\eta$ that determines the effective channel strength
is given by the smallest local minimizer of \eqref{eq:rs_potential} \cite{Fen2019c}.
The next Theorem gives the pointwise limit of \eqref{eq:rs_potential}.
\begin{theorem}
    \label{thm:limit}
    In the limit $K_a,J \to \infty$, $R_\text{in},\SNR \to 0$ with fixed ratios
    $\mathcal{E}_\text{in} = \SNR/(2R_\text{in})$, $S=KR_\text{in}$ and $J = \alpha \log_2 K_a$ for 
    some $\alpha > 1$ the pointwise limit of the RS-potential \eqref{eq:rs_potential}
    is given by
    (up to additive or multiplicative terms that are independent of $\eta$ and therefore do not
    influence the critical points of $i^\text{RS}(\eta)$):
    \beq
    \begin{split}
        &i^\text{RS}_\infty(\eta):=\lim_{J\to\infty} i^\text{RS}(\eta) =
        \eta S \mathcal{E}_\text{in}[1-\theta(\eta-\bar{\eta})]\\
        &+\frac{S}{\log_2 e}\left(1-\frac{1}{\alpha}\right)\theta(\eta-\bar{\eta})
        + \frac{1}{2}[(\eta-1)-\ln \eta]
    \end{split}
    \label{eq:rs_limit}
    \eeq
    where
    \beq
        \theta(x):=
        \begin{cases}
            1,\quad \text{if } x > 0\\
            \frac{1}{2},\quad \text{if } x=0\\
            0,\quad \text{if } x<0\\
        \end{cases}
    \label{def:theta}
    \eeq
    and
    \beq
    \bar{\eta} = \frac{1-\frac{1}{\alpha}}{\mathcal{E}_\text{in}\log_2 e}
    \label{def:eta_bar}
    \eeq
    \hfill$\square$
\end{theorem}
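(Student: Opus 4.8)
The plan is to pull the entire $\eta$-dependence of \eqref{eq:rs_potential} into the scalar mutual information $I(\eta\hat P)$ and then read off its two-regime behaviour. First I would simplify the second term: since $\beta=2^JR_\text{in}/J$ we have $2^J/(2\beta)=J/(2R_\text{in})$ and $(\eta-1)\log_2 e-\log_2\eta=\log_2 e\,[(\eta-1)-\ln\eta]$, so dividing the potential by the $\eta$-independent constant $C_J:=J\log_2 e/R_\text{in}$ (which is permitted by the statement) turns it into $\tfrac12[(\eta-1)-\ln\eta]+\tilde I(\eta)$, with $\tilde I(\eta):=\big(2^JR_\text{in}/(J\log_2 e)\big)\,I(\eta\hat P)$. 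It thus remains to prove $\tilde I(\eta)\to\eta S\mathcal{E}_\text{in}$ for $\eta<\bar\eta$ and $\tilde I(\eta)\to (S/\log_2 e)(1-1/\alpha)$ for $\eta>\bar\eta$; the value at $\eta=\bar\eta$ is then forced to be the common value of these one-sided limits, which is exactly what the convention $\theta(0)=\tfrac12$ produces.

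Next I would record the elementary asymptotics of the prior of \eqref{eq:d_model}. With $p_1:=1-p_0=1-(1-2^{-J})^{K_a}$ and $2^J=K_a^\alpha$, expanding $\ln(1-2^{-J})$ gives $p_1=K_a^{1-\alpha}(1+o(1))$, hence $2^Jp_1=K_a(1+o(1))$, $\ln(1/p_1)=(\alpha-1)\ln K_a\,(1+o(1))$ and $\mathcal{H}_2(p_1)=p_1\log_2(1/p_1)(1+o(1))=(\alpha-1)K_a^{1-\alpha}\log_2 K_a\,(1+o(1))$. Also $\hat P=2J\mathcal{E}_\text{in}$, so the effective SNR of \eqref{eq:d_model} is $\gamma:=\eta\hat P=2\eta J\mathcal{E}_\text{in}$; comparing it with the ``detection threshold'' $\gamma^\star:=2\ln(1/p_1)$ and using \eqref{def:eta_bar} one finds $\bar\eta\hat P=\gamma^\star(1+o(1))$, so that for fixed $\eta$ one has $\gamma=(\eta/\bar\eta)\gamma^\star(1+o(1))$ and in particular $\gamma\lessgtr(1-\epsilon)\gamma^\star$ eventually according as $\eta\lessgtr\bar\eta$. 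This is where the threshold in \eqref{eq:rs_limit} comes from.

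The heart of the argument is to sandwich $I(\gamma)$. For the upper bound use $I(\gamma)\le\mathcal{H}_2(p_1)$ (since $I(s;r)\le H(s)$) together with the standard I--MMSE relation $I(\gamma)=\tfrac12\log_2 e\int_0^\gamma\mmse(\gamma')\,d\gamma'$ and $\mmse(\gamma')\le\var(s)=p_1(1-p_1)$, giving $I(\gamma)\le\min\{\mathcal{H}_2(p_1),\tfrac12\log_2 e\,\gamma p_1(1-p_1)\}$. For the lower bounds I would analyse the posterior $g(r):=\PP(s=1\mid r)=\big(1+\tfrac{1-p_1}{p_1}\exp(\tfrac{\gamma}{2}-\sqrt\gamma\,r)\big)^{-1}$ and the two identities $\mmse(\gamma')=p_1\big(1-\EE[g(r)\mid s=1]\big)$ and $I(\gamma)=\mathcal{H}_2(p_1)-\EE[\mathcal{H}_2(g(r))]$ (both at the appropriate SNR). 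If $\eta<\bar\eta$, a Gaussian tail estimate shows that, even conditioned on $s=1$, $g(r)$ is $o(1)$ with probability $1-o(1)$, so $\EE[g(r)\mid s=1]=o(1)$ and $\mmse(\gamma)=p_1(1-o(1))$; with the monotonicity of the MMSE this yields $I(\gamma)\ge\tfrac12\log_2 e\,\gamma\,\mmse(\gamma)=\tfrac12\log_2 e\,\gamma p_1(1-o(1))$, matching the upper bound. If $\eta>\bar\eta$, the same posterior concentrates near $0$ on $\{s=0\}$ and near $1$ on $\{s=1\}$ outside an event of probability $o(1)$, and the rare deviations contribute only $o(\mathcal{H}_2(p_1))$ to $\EE[\mathcal{H}_2(g(r))]$ (the part conditioned on $s=1$ is at most $p_1=o(\mathcal{H}_2(p_1))$ because $\mathcal{H}_2\le1$), so $I(\gamma)=\mathcal{H}_2(p_1)(1-o(1))$. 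Substituting these into $\tilde I(\eta)$ and using $2^Jp_1\to K_a$, $S=K_aR_\text{in}$, $\mathcal{H}_2(p_1)\sim(\alpha-1)K_a^{1-\alpha}\log_2 K_a$ and $J=\alpha\log_2 K_a$ gives $\tilde I(\eta)\to\eta S\mathcal{E}_\text{in}$ for $\eta<\bar\eta$ and $\tilde I(\eta)\to(S/\log_2 e)(1-1/\alpha)$ for $\eta>\bar\eta$; at $\eta=\bar\eta$ the same sandwich (both bounds being $\asymp\tfrac12\log_2 e\,\gamma^\star p_1$) squeezes $\tilde I$ to the common value. Adding the $\tfrac12[(\eta-1)-\ln\eta]$ term from the first paragraph yields exactly \eqref{eq:rs_limit}.

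The step I expect to be the main obstacle is the lower bound on $I(\gamma)$, and specifically making the ``detection threshold'' sharp: one has to show that the posterior $g(r)$ switches from ``essentially never detects the rare symbol'' to ``essentially always detects it'' precisely at $\gamma^\star=2\ln(1/p_1)$, i.e.\ at $\eta=\bar\eta$ (in regime (i) one needs $\EE[g(r)\mid s=1]=o(1)$, which fails once $\gamma$ exceeds $\gamma^\star$, and in regime (ii) reliable detection sets in only above $\gamma^\star$), and that all correction terms are $o(1)$ \emph{after} multiplication by the large prefactor $2^JR_\text{in}/(J\log_2 e)\asymp K_a^{\alpha-1}/\log_2 K_a$. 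Since every relevant Gaussian tail exponent is of order $\log K_a$, the relative errors decay only like small powers of $p_1$, so the bookkeeping of the tail integrals must be done with some care; but away from the single point $\eta=\bar\eta$ it is otherwise routine.
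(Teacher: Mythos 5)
Your proposal is correct in outline, and it takes a genuinely different route from the paper. The paper evaluates the $\eta$-dependence of $I(\eta\hat P)$ directly: it writes the output entropy of the decoupled channel \eqref{eq:d_model} as $p_0H_0+(1-p_0)H_1$, replaces $\ln p(y)$ by a log-sum-exp (minimum-of-exponents) approximation whose crossover point $\gamma\to\infty$ while $\gamma'=\gamma-(\eta\hat P)^{1/2}$ changes sign exactly at $\eta=\bar\eta$, and then passes to the limit inside the integrals via dominated convergence, so the two-regime form of \eqref{eq:rs_limit} comes from the pointwise limit of $-\ln p\bigl(y+(\eta\hat P)^{1/2}\bigr)/J$. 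You instead sandwich the scalar mutual information itself, using $I\le \mathcal{H}_2(p_1)$ and the I--MMSE bound $I\le\frac{1}{2}\log_2 e\,\gamma p_1(1-p_1)$ for the upper bound, and MMSE monotonicity plus posterior concentration (via the identities $\mmse=p_1(1-\EE[g(r)\mid s=1])$ and $I=\mathcal{H}_2(p_1)-\EE[\mathcal{H}_2(g(r))]$) for matching lower bounds; the threshold $\gamma^\star=2\ln(1/p_1)$, i.e.\ $\bar\eta$ of \eqref{def:eta_bar}, then appears as the crossing point of the two upper bounds, and your scaling constants ($2^Jp_1\to K_a$, $\mathcal{H}_2(p_1)\sim(\alpha-1)K_a^{1-\alpha}\log_2K_a$, $J=\alpha\log_2K_a$) reproduce exactly the two branch values of \eqref{eq:rs_limit}, which coincide at $\eta=\bar\eta$ so the $\theta(0)=\tfrac12$ convention is immaterial. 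Your route buys finiteness of all quantities (no divergent $\eta$-independent additive terms to subtract and no domination argument needed) and an information-theoretic interpretation of $\bar\eta$; the paper's route gives the limit at $\eta=\bar\eta$ for free and avoids your thinnest step, namely the regime $\eta>\bar\eta$ bound $\EE[\mathcal{H}_2(g(r))]=o(\mathcal{H}_2(p_1))$: the $s=0$ conditional part is the delicate one, since $\EE[g(r)\mid s=0]$ estimated via $g\le\frac{p_1}{p_0}e^{\sqrt\gamma r-\gamma/2}$ is only $O(p_1/p_0)$ because of the heavy factor $e^{\sqrt\gamma r}$, so you must either use the exact identity $p_0\EE[g\mid s=0]=p_1(1-\EE[g\mid s=1])$ together with a truncation, or a completion-of-squares computation showing $\EE\bigl[\mathcal{H}_2(g)\mid s=0\bigr]\lesssim p_1\sqrt{\gamma}\,e^{-c\gamma^\star}$; this does go through for every fixed $\eta>\bar\eta$, but it is precisely the bookkeeping you deferred. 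Also, your first-paragraph claim that the value at $\eta=\bar\eta$ is ``forced'' by the one-sided limits is not an argument; the correct fix is the one you sketch later (monotonicity of $I$ in the SNR to import the regime-(i) lower bound at $(1-\epsilon)\gamma^\star$ and let $\epsilon\to0$).
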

\begin{proof}
The RS-potential \eqref{eq:rs_potential}, rescaled by $\beta/2^J$
takes the form
\beq
i^\text{RS}(\eta) = \frac{R_\text{in}2^J}{J}I(\eta\hat{P}) +
\frac{\log_2 e}{2}[(\eta - 1) - \ln \eta]
\label{eq:rs_potential_rescaled}
\eeq
with the mutual information
\beq
I(\eta\hat{P}) := I(X;Y) = H(Y)-H(Y|X)
\eeq
for
$P(X=0) = p_0$, $P(X=1) = 1-p_0$ and $Y = (\eta\hat{P})^\frac{1}{2}X + Z$,
for $Z\sim\mathcal{N}(0,1)$ independent of $X$. 
The mutual information $I(\eta\hat{P})$ can be evaluated as follows.
First, note that in an additive channel $H(Y|X) = H(Z)$,
so $H(Y|X)$ is independent of $\eta$ and therefore we can ignore it.
The distribution of $Y$ is given by
\beq
\begin{split}
    p(y) &= p_0p(y|x=0) + (1-p_0)p(y|x=1)\\ 
         &= \frac{p_0}{\sqrt{2\pi}}\exp\left(-\frac{y^2}{2}\right)\\
         &+ \frac{1-p_0}{\sqrt{2\pi}}\exp\left(-\frac{1}{2}\left(y-(\eta\hat{P})^\frac{1}{2}\right)^2\right),
\end{split}
\eeq
so the differential output entropy $H(Y) = -\int p(y)\log_2 p(y)\mathrm{d}y$ can be split into the
sum of two parts. Define $H_0$ and $H_1$ respectively by
\beq
H_0 := -\frac{1}{\sqrt{2\pi}}\int_{-\infty}^\infty \exp\left(-\frac{y^2}{2}\right)\log_2(p(y))\mathrm{d}y
\label{def:H0}
\eeq
and
\beq
\begin{split}
    H_1 &:= -\frac{1}{\sqrt{2\pi}}\int_{-\infty}^\infty
    \exp\left(-\frac{1}{2}\left(y-(\eta\hat{P})^\frac{1}{2}\right)^2\right)\log_2(p(y))\mathrm{d}y\\
    &= -\frac{1}{\sqrt{2\pi}}\int_{-\infty}^\infty
    \exp\left(-\frac{y^2}{2}\right)\log_2\left(p\left(y+(\eta\hat{P})^\frac{1}{2}\right)\right)\mathrm{d}y
\end{split}
\label{def:H1}
\eeq
such that the following relation holds:
\beq
I(\eta\hat{P}) = p_0H_0 + (1-p_0)H_1.
\eeq
Taking into account the scaling factor in \eqref{eq:rs_potential_rescaled}
and using that $\lim_{J\to\infty}2^J(1-p_0) = K_a$ and $\lim_{J\to\infty}p_0 = 1$
we get that
\beq
\begin{split}
    \lim_{J\to\infty}\frac{R_\text{in}2^J}{J}I(\eta\hat{P}) 
    &= \lim_{J\to\infty}\left(\frac{R_\text{in}2^J}{J}H_0 + \frac{S}{J}H_1\right)
    \label{eq:lim_I}
\end{split}
\eeq
Now let us take a closer look at $\log_2 p(y) = \log_2(e) \ln p(y)$ which appears
in both $H_0$ and $H_1$.
Let $x_1,x_2 > 0$ with $x_2>x_1$. Then for 
the logarithm of the sum of exponentials it holds that
\beq
-\ln (e^{-x_1} + e^{-x_2}) = x_1 + \ln(1+e^{-(x_2-x_1)}).
\eeq
The error term $\ln(1+e^{-(x_2-x_1)})$ decays exponentially as the difference $x_2-x_1$ grows.
Since $p(y)$ is the sum of two exponentials we can approximate $\ln p(y)$ by:
\beq
\begin{split}
    &-\ln p(y) =\\ 
    &\min\left\{\frac{y^2}{2} - \ln(p_0),\frac{1}{2}\left(y-(\eta\hat{P})^\frac{1}{2}\right)^2-\ln(1-p_0)\right\}
\end{split}
\label{eq:max_logsum}
\eeq
This approximation is justified, since the difference of the two exponents
in $p(y)$ is proportional to $\sqrt{J}$, and so it grows large with $J$. 
\footnote{Technically, this approximation does not hold at the point where the two exponents in
    $p(y)$ are equal. However, since the integral of a function does not depend
on the value of the function at points of measure zero, we can redefine $\ln p(y)$ arbitrary 
at that point.}

First, note, that since $\min\{a,b\} \leq a$ and $\min\{a,b\} \leq b$ holds for all $a,b\in \RR$,
$-\ln p(y) \leq y^2/2-\ln(1-p_0)$ as well as $-\ln p(y+(\eta\hat{P})^\frac{1}{2})\leq y^2/2 + \ln (2^J/K)$.
This means that each of the integrands in
$H_0$ and $H_1/J$ resp. is bounded uniformly, for all $J$, by an integrable function.
This allows us to evaluate the integrals by using Lebesgue's theorem on dominated convergence.
For this purpose we need to calculate the pointwise limits of $\ln p(y)$ and
$\ln p(y+(\eta\hat{P})^\frac{1}{2})/J$. The theorem on dominated convergence then states,
that the limit of the integrals is given by the integral of the pointwise limits.\\
The minimum in \eqref{eq:max_logsum} can be expressed as
\beq
-\ln p(y) =
\begin{cases}
    \frac{y^2}{2} &\quad y<\gamma \\
    \frac{1}{2}\left(y-(\eta\hat{P})^\frac{1}{2}\right)^2 +\ln\left(\frac{2^J}{K}\right)&\quad y\geq\gamma
\end{cases}
\eeq
where we neglected $\ln(p_0)=\ln(1 - K/2^J) \sim K/2^J$ and $\gamma$ is given by
\beq
\gamma = \frac{1}{2}\left(\eta\hat{P}\right)^\frac{1}{2} + \ln\left(\frac{2^J}{K}\right)\left(\eta\hat{P}\right)^{-\frac{1}{2}}.
\eeq
Given the considered scaling constraints and
$\hat{P} = J\SNR/R_\text{in} = 2J\mathcal{E}_\text{in}$,
$\gamma$ can be rewritten as
\beq
\gamma = \sqrt{\frac{J}{2}}\left(\sqrt{\eta \mathcal{E}_\text{in}} + \frac{1-\frac{1}{\alpha}}{\log e\sqrt{\eta \mathcal{E}_\text{in}}}\right)
\eeq
The term in parenthesis is strictly positive for all $\eta$ so $\lim_{J\to\infty}\gamma = \infty$
and therefore the pointwise limit of $\ln p(y)$ is give by $\lim_{J\to\infty} \ln p(y) = -y^2/2$.
It follows from
Lebesgue's theorem on dominated convergence that
\beq
\lim_{J\to\infty} H_0 = \log_2 e
\label{eq:lim_H0}
\eeq
which is independent of $\eta$, so we can ignore it when evaluating $i^\text{RS}(\eta)$.
For the calculation of $H_1$ note that:
\beq
-\ln p\left(y+(\eta\hat{P})^\frac{1}{2}\right) =
\begin{cases}
    \frac{1}{2}\left(y+(\eta\hat{P})^\frac{1}{2}\right)^2 &\quad y<\gamma'\\
    \frac{y^2}{2} + \ln \left(\frac{2^J}{K}\right) &\quad y\geq\gamma'
\end{cases}
\eeq
where we defined
$\gamma' := \gamma - (\eta\hat{P})^\frac{1}{2}$. $\gamma'$ is not non-negative anymore and
therefore the asymptotic behavior of $\gamma'$ depends
on $\eta$ in the following way:
\beq
\lim_{J\to\infty}\gamma' = 
\begin{cases}
    \infty  &\ \text{if } \eta < \bar{\eta}\\
    0       &\ \text{if } \eta = \bar{\eta}\\
    -\infty &\ \text{if } \eta > \bar{\eta}\\
\end{cases}
\eeq
where $\bar{\eta}$ was defined in \eqref{def:eta_bar}.
This gives the following asymptotic behavior:
\beq
-\lim_{J\to\infty}\frac{\ln p(y + (\eta\hat{P})^\frac{1}{2})}{J} = 
\begin{cases}
    \eta \mathcal{E}_\text{in} &\ \eta < \bar{\eta}\\
    (1-\alpha^{-1})/\log_2 e &\ \eta \geq \bar{\eta}
\end{cases}
\label{eq:lim_ln}
\eeq
Finally, using \eqref{eq:lim_H0}, \eqref{def:H1}, \eqref{eq:lim_I}, \eqref{eq:lim_ln} and the $\theta$ function defined in
\eqref{def:theta} we get:
\beq
\begin{split}
    &\lim_{J\to\infty}\left(\frac{i^\text{RS}(\eta)}{\log_2 e} - \frac{R_\text{in}2^J}{J}\right) 
    = \eta S\mathcal{E}_\text{in}[1-\theta(\eta-\hat{\eta})] \\
    &+ S \left(1-\frac{1}{\alpha}\right)\theta(\eta-\bar{\eta})
+\frac{1}{2}\left[(\eta - 1) - \ln \eta\right]
\end{split}
\eeq
This proofs the statement of the theorem. 
\eqref{eq:max_logsum}.
\end{proof}
With Theorem \ref{thm:limit} we can proof Theorem \ref{thm:inner}
and conclude the proof of Theorem \ref{thm:conc}.
\begin{proof}[Proof of Theorem \ref{thm:inner}]
    We have discussed that the error probability of the SBS-MAP detector is specified
    by $\eta^*\hat{P} = \eta^*2\mathcal{E}_\text{in}J$,
    the effective channel strength in the decoupled model \eqref{eq:d_model},
    where $\eta^*$ is the 
    global minimizer of $i^\text{RS}(\eta)$ in the interval $[0,1]$. In a similar fashion
    the error probability of the AMP decoder \eqref{eq:amp} at convergence is described by
    $\eta^*_\text{loc}\hat{P}$, where $\eta^*_\text{loc}$ is the smallest local minimizer
    of $i^\text{RS}(\eta)$. \\ 
    By Theorem \ref{thm:limit} the derivative of $i^\text{RS}_\infty(\eta)$ in
    \eqref{eq:rs_limit} is given by
    \beq
    \frac{\partial i^\text{RS}_\infty}{\partial \eta}(\eta) = S\mathcal{E}_\text{in}[1-\theta(\eta-\bar{\eta})] + \frac{1}{2}\left(1-\frac{1}{\eta}\right)
    \eeq
    for $\eta\neq\bar{\eta}$.
    The critical points of the derivative are
    \beq
        \eta_0^* = (1+2S\mathcal{E}_\text{in})^{-1}
    \eeq
    and
    \beq
        \eta_1^* = 1.
    \eeq
    Note that the first point $\eta_0^*$ is critical if and only if $\eta_0^* < \bar{\eta}$,
    which,
    after rearranging, gives precisely condition \eqref{eq:inner_thm_alg}.
    Also note, that the second derivative of
    $i^\text{RS}_\infty$ is $(4\eta)^{-2}$, so it is non-negative for all $\eta>0$.
    Therefore the critical points are indeed minima. A local maximum may appear only at $\eta=\bar{\eta}$
    where $i^\text{RS}_\infty$ is not differentiable.
    The values of $i^\text{RS}_\infty$ at the minimal points are
    \beq
    \begin{split}
    i^\text{RS}_\infty(\eta^*_0) 
    &= \frac{S\mathcal{E}_\text{in}}{1 + 2S\mathcal{E}_\text{in}} + 
    \frac{1}{2}\left[\frac{-2S\mathcal{E}_\text{in}}{1 + 2S\mathcal{E}_\text{in}}+\ln(1+2S\mathcal{E}_\text{in})\right] \\
    &= \frac{\log_2 (1 + 2S\mathcal{E}_\text{in})}{2\log_2 e}
    \end{split}
    \eeq
    if $\eta^*_0< \bar{\eta}$, and
    \beq
    i^\text{RS}_\infty(\eta^*_1) = \frac{S}{\log_2 e}\left(1 - \frac{1}{\alpha}\right)
    \eeq
    It is apparent that $i^\text{RS}_\infty(\eta^*_1)$ is the global minimum if and only if condition
    \eqref{eq:inner_thm_opt} is fulfilled.
    We implicitly used here that $\bar{\eta}\leq 1$,
    that is because condition \eqref{eq:inner_thm_opt} implies $\bar{\eta}< 1$,
    which can be seen by solving inequality \eqref{eq:inner_thm_opt} for $\mathcal{E}_\text{in}$.
    If $\eta^*_1 = 1$ is indeed the global minimizer of \eqref{eq:rs_limit}, 
    the effective power in the decoupled channel \eqref{eq:d_model} is given by $\hat{P}$. 
    Since $\hat{P}$ grows proportional to $J$,
    the effective power in the channel and therefore also the probability of misestimating
    the support go to zero with $J\to\infty$. This concludes the proof of Theorem \ref{thm:inner}.
\end{proof}
\section{Conclusion}
We have shown that the concatenated coding construction in \cite{Ama2018,Fen2019c}
is asymptotically
optimal as the blocklength $n$, the number of active users $K_a$,
the number $L$ and the size $J$ of the subslots go to infinity. This makes the
SPARC based coding construction the first of the known U-RA codes to have an 
asymptotic optimality guarantee. Our result also shows more generally that 
the achievable trade-off between sumrate and $E_b/N_0$ in U-RA converges to the 
Shannon bound \eqref{eq:conc_thm} in the considered limit.
\printbibliography
\balance
\end{document}